    \OR\ifentrytype{article}
    \OR\ifentrytype{thesis}
    \OR\ifentrytype{inproceedings}
    \OR\ifentrytype{inbook}}
\newtheorem{theorem}{Theorem}
\newtheorem{lemma}[theorem]{Lemma}
\newtheorem{assumption}[theorem]{Assumption}
\newtheorem{definition}[theorem]{Definition}
\newtheorem{remark}[theorem]{Remark}
\newcommand{\Hcal}{\mathcal{H}}
\newcommand{\I}{\mathbf{I}}
\newcommand{\0}{\mathbf{0}}
\newcommand{\Gcal}{\mathcal{G}}
\newcommand{\Ocal}{\mathcal{O}}
\newcommand{\Tcal}{\mathcal{T}}
\newcommand{\Ncal}{\mathcal{N}}
\newcommand{\RR}[1]{\mathbb{R}^{#1}}
\title{
	Data-driven Unknown-input Observers and State Estimation
}
\author{Mustafa Sahin Turan, Giancarlo Ferrari-Trecate
	\thanks{Authors are with the Institute of Mechanical Engineering (IGM),
		 \'Ecole Polytechnique F\'ed\'erale de Lausanne (EPFL),
		Switzerland. Email: {\tt\small  \{mustafa.turan, giancarlo.ferraritrecate\}@epfl.ch}}
	\thanks{This work has been supported by the Swiss National Science Foundation under the COFLEX project (grant number 200021\_169906) and the National Centre of Competence in Research (NCCR) in Dependable and Ubiquitous Automation.}
}
\begin{document}
\maketitle

\begin{abstract}
Unknown-input observers (UIOs) allow for estimation of the states of an LTI system without knowledge of all inputs.
In this paper, we provide a novel data-driven UIO based on behavioral system theory and the result known as Fundamental Lemma proposed by Jan Willems and coworkers. We give necessary and sufficient conditions on the data collected from the system for the existence of a UIO providing asymptotically converging state estimates, and propose a purely data-driven algorithm for their computation. Even though we focus on UIOs, our results also apply to the standard case of completely known inputs. As an example, we apply the proposed method to distributed state estimation in DC microgrids and illustrate its potential for cyber-attack detection.
\end{abstract}\vspace{-0.1cm}

\section{Introduction}\label{sec:Introduction}

The problem of estimating the states of an LTI system when some inputs cannot be measured has been studied within the control community for almost half a century~\cite{meditch1973observers}, and has been motivated by applications in control, robust estimation, and fault diagnosis.
Among approaches available in the literature, some use \textit{a priori} information about the unknown inputs $d$, whereas some others assume no such prior and develop \textit{unknown-input decoupling} observers, i.e., state estimators whose estimation error is independent of $d$ and asymptotically converges to zero~\cite{darouach1994full}. In this paper, we focus on the latter class. Such observers, called unknown-input observers (UIOs) from now on, have been developed for continuous-time~\cite{darouach1994full}
and discrete-time systems~\cite{valcher1999state}. UIOs are often used for fault detection~\cite{chen1996design, gao2015unknown} and, more recently, for cyber-attack detection~\cite{gallo2018distributed, gallo2020distributed}. They are an attractive tool in \textit{remote} and \textit{distributed} settings, where state estimators are not collocated with the system, and therefore, do not have access to all its inputs.

The work~\cite{valcher1999state} provides necessary and sufficient UIO existence conditions based on system matrices, which represent suitable observability and decoupling properties of the system. It also gives a model-based UIO design procedure under these conditions. However, the literature lacks end-to-end methodologies using data instead of a system model. In particular, no existing work provides a data-driven formulation of UIO existence conditions and design. An approach to achieve this goal is to follow a two-step procedure by first identifying the system from the collected data and then designing a UIO for the reconstructed model. 

Among the techniques for identifying systems with fully- or partially-unknown inputs, subspace identification can be used when $d$ is a zero-mean stationary white noise~\cite{van2012subspace}. Similarly, errors-in-variables (EIV) 
methods can be applied when the unknown input can be modeled by additive stationary noise perturbing known input variables~\cite{soderstrom2007errors}.
In order to remove the above assumptions on the unknown inputs, recently, the \textit{indirect framework} has been proposed~\cite{linder2017identification}. The goal is achieved by introducing system-level assumptions ensuring that some inputs can be directly measured, and certain parts of the system dynamics are known. The element-level system identification method proposed in~\cite{wang1994element} does away with assumptions on the system or the unknown inputs, but restricts the focus on mechanical systems. Similarly,~\cite{yu2016blind} proposes a \textit{blind subspace identification} scheme under the assumption of persistently exciting unknown inputs.

We highlight that, except~\cite{linder2017identification}, none of the above methods guarantees the exact identification of the system with finite data, even without noise in the measured variables. Moreover, conditions for the existence of a UIO are rank-based~\cite{valcher1999state}, and therefore extremely sensitive to errors in the identified system matrices. Identification errors may also result in poor estimation performance of UIOs as input-decoupling conditions are inherently sensitive to uncertainties in the system matrices. In addition,~\cite{linder2017identification} does not identify the input channels corresponding to the unknown inputs, hampering the application of existing model-based UIO design methods. 

An alternative to the two-step approach
is to check the existence of a UIO and design the observer directly from data, without building a model of the system. 

In this paper, we propose a method with these features by exploiting the \textit{Fundamental Lemma}~\cite{willems2005note}, a key result in behavioral system theory showing that all trajectories of a linear system can be spanned by a finite number of input-output samples. The Fundamental Lemma has been used for developing 
data-driven simulation and output prediction~\cite{markovsky2008data}, stability analysis and state-feedback control design~\cite{de2019formulas}, predictive control~\cite{coulson2019data, berberich2020data}, and robust optimal control~\cite{xu2021non, xu2021data}. In particular, we exploit the results in~\cite{de2019formulas, markovsky2008data} to give necessary and sufficient conditions for the existence of a UIO and develop a design procedure. In this paper, we consider the case of noiseless data 
but, unlike~\cite{van2012subspace, soderstrom2007errors, linder2017identification}, we do not assume any knowledge of the system dynamics or the process generating $d$. 
Moreover, our results can be directly extended to standard state estimation with no unknown inputs. As an application example, we apply the proposed UIO to DC microgrids (DCmGs), and show how it can be used for distributed cyber-attack detection. 

This paper is organized as follows. Section~\ref{sec:ProblemFormulation} formally presents the problem, while the UIO design is discussed in Section~\ref{sec:DataDrivenUIO}. The application example is given in Section~\ref{sec:MicrogridExample}, before concluding the paper in Section~\ref{sec:Conclusions}.\vspace{-0.2cm}

\subsection*{Notation}\label{subsec:Notation}\vspace{-0.1cm}
$\I_n \in \RR{n}$ and $\0_{n\times m} \in \RR{n\times m}$ represent identity and zero matrices, respectively. 
For a matrix $A \in \mathbb{R}^{n\times m}$, $\ker(A)$ and $\mathrm{range}(A)$ denote its null and column spaces, respectively. 
$A^\dagger$ is used for the pseudoinverse of $A$.
For a sequence ${\{v_k\}}_{k=i}^j$ of vectors, $\mathrm{col}(\{v_k\}_{k=i}^j)$ stands for the column concatenation of the vectors $v_k$ and the resulting vector is denoted by $v_{[i:j]}$.
When the start and end indices $i,~j$ are clear from the context, we use $v$ instead.
The \textit{Hankel matrix of depth} $L$ associated to $v_{[i:j]}$, $j\geq i+L-1$, is defined as \vspace{-0.1cm}
\begin{align*}\vspace{-0.1cm}
\mathcal{H}_{L}(v)\triangleq\left[\begin{array}{cccc}v_{i} & v_{i+1} & \cdots & v_{j-L+1} \\ v_{i+1} & v_{i+2} & \cdots & v_{j-L+2} \\ \vdots & \vdots & \ddots & \vdots \\ v_{i+L-1} & v_{i+L} & \cdots & v_{j}\end{array}\right].
\end{align*}
The sequence ${\{v_k\}}_{k=i}^j$ is called \textit{persistently exciting of order $L$} if $\mathcal{H}_{L}(v)$ has full row rank.

\section{Problem Formulation}\label{sec:ProblemFormulation}
Consider a system $\Gcal$ with the state-space representation\vspace{-0.1cm}
\begin{equation} \label{eq:LTI_dynamics}\vspace{-0.1cm}
	\begin{split}
	  x_{t+1} &= Ax_t + B u_t + Ed_t, \\
	  y_t &= Cx_t,
	\end{split}
\end{equation}
where $x_t\in\RR{n}$ are the states, $u_t\in\RR{m}$ are the (known) inputs, $y_t\in\RR{p}$ are the outputs, and $d_t\in\RR{m_d}$ are the \textit{unknown} inputs (e.g. disturbances) of the system, and hence unmeasured. In this paper, we assume that the system is in minimal form, i.e., $(A,[B~E])$ is controllable and $(A,C)$ is observable. 

\begin{definition}[UIO~\cite{valcher1999state}]\label{def:UIO}
	An LTI system of the form \vspace{-0.10cm}
	\begin{equation} \label{eq:UIO_dynamics_general}\vspace{-0.10cm}
	\begin{split}
	z_{t+1} &= A_\mathrm{UIO}z_t + B_\mathrm{UIO} v_t, \\
	\hat{x}_t &= z_t + D_{\mathrm{UIO}} v_t,
	\end{split}
	\end{equation}
	with inputs $v\triangleq [u^\top~y^\top]^\top$ and outputs $\hat{x}$ is a UIO for the system in~\eqref{eq:LTI_dynamics} if $\hat{x}_t-x_t\rightarrow 0$ as $t\rightarrow \infty$ for any initial states $x_0$ and $z_0$, input $u$, and unknown input $d$.
\end{definition}

\begin{remark}\label{rem:standard_state_estimation}
	When $m_d=0$, the formulation in~\eqref{eq:LTI_dynamics} and~\eqref{eq:UIO_dynamics_general}, as well as the following analysis, capture standard state-estimation problems where all inputs are known. 
\end{remark}

If the matrices $A$, $C$, and $E$ of the system~\eqref{eq:LTI_dynamics} satisfy certain unknown-input observability conditions, a UIO exists~\cite{valcher1999state}.

\begin{remark}\label{rem:UIO_traj}
	As shown in~\cite{valcher1999state}, if a UIO can be designed, the state-estimation error $e_t\triangleq x_t-\hat{x}_t$ follows the autonomous dynamics $e_{t+1}=A_\mathrm{UIO}e_t$. By setting the initial condition of the UIO as $z_0 = x_0-D_{\mathrm{UIO}} y_0$, one gets $\hat{x}_0=x_0$ and, consequently, $\hat{x}_t = x_t~~\forall t$. Therefore, for any input-output-state trajectory $(u, y, x)$ of $\Gcal$, $([u^\top~y^\top]^\top, x)$ is an input-output trajectory of the UIO~\eqref{eq:UIO_dynamics_general}.
\end{remark}

In the sequel, we assume that $x_0$ is not available and, thus, $z_0$ cannot be chosen as above. Regardless, the observation in Remark~\ref{rem:UIO_traj} is key in our approach as it enables us to collect data from the UIO without constructing it.

In order to provide a data-driven UIO formulation, 
we assume that an \textit{offline} experiment has been conducted with the system~$\Gcal$ before the start of any estimation task, and the corresponding input-output-state trajectories $\bar{u}\triangleq\mathrm{col}(\{\bar{u}_i\}_{i=0}^{T-1})$, $\bar{y}\triangleq\mathrm{col}(\{\bar{y}_i\}_{i=0}^{T-1})$, $\bar{x}\triangleq\mathrm{col}(\{\bar{x}_i\}_{i=0}^{T-1})$ have been collected. 
These data, named \textit{historical}, define the following matrices \vspace{-0.15cm}
\begin{equation}\label{eq:Hankel_IOS}\vspace{-0.15cm}
	\begin{split}
	U\triangleq \Hcal_L(\bar{u}),\enskip
	Y\triangleq \Hcal_L(\bar{y}),\enskip
	X\triangleq \Hcal_L(\bar{x}),
	\end{split}
\end{equation}
for some $L\leq T$. Similarly, define the Hankel matrix corresponding to $\bar{v}\triangleq\mathrm{col}(\{\bar{v}_i\}_{i=0}^{T-1}) = \mathrm{col}(\{[\bar{u}_i^\top~\bar{y}_i^\top]^\top\}_{i=0}^{T-1})$ as $V\triangleq\Hcal_L(\bar{v})$.
Although $d$ is not measured, we introduce the notation $\bar{d} \triangleq \mathrm{col} (\{\bar{d}_i\}_{i=0}^{T-1})$ for historical unknown input data. The corresponding Hankel matrix is $D\triangleq \Hcal_L(\bar{d})$.

When a UIO~\eqref{eq:UIO_dynamics_general} exists, results in~\cite{markovsky2008data} can be applied to predict its outputs, which is equivalent to computing state estimations. This methodology requires, at each time step $t$, to specify \textit{recent} data $v_{t,\mathrm{ini}}\triangleq\mathrm{col}(\{v_i\}_{i=t-T_\mathrm{ini}}^{t-1})$, $\hat{x}_{t,\mathrm{ini}}\triangleq\mathrm{col}(\{\hat{x}_i\}_{i=t-T_\mathrm{ini}}^{t-1})$ consisting of $T_\mathrm{ini}$ samples. This data uniquely determines the the state $z_{t-1}$ of the UIO 
if $T_\mathrm{ini}\geq l_\mathrm{UIO}$, where $l_\mathrm{UIO}$ is the observability index of the UIO. 
Algorithm~1 in~\cite{markovsky2008data} computes output predictions for a future horizon of $T_f$ samples based on the recent data and \textit{future} inputs $v_{t,f}\triangleq\mathrm{col}(\{v_i\}_{i=t}^{t+T_f-1})$. For this purpose, Hankel matrices are separated into \textit{past} and \textit{future} blocks denoted by subscripts $p$ and $f$, respectively:\vspace{-0.1cm}
\begin{equation}\label{eq:past_future_Hankel}\vspace{-0.1cm}
	U = \begin{bmatrix}
	U_p \\
	U_f
	\end{bmatrix},\enskip Y = \begin{bmatrix}
	Y_p \\
	Y_f
	\end{bmatrix}, \enskip X = \begin{bmatrix}
	X_p \\
	X_f
	\end{bmatrix}, \enskip V = \begin{bmatrix}
	V_p \\
	V_f
	\end{bmatrix},
\end{equation}
where the upper block matrices consist of $T_\mathrm{ini}$ block rows, and the lower block matrices consist of $T_f$ block rows. 
In this paper, we iteratively apply the abovementioned algorithm with one-step-ahead predictions (see Section~\ref{sec:DataDrivenUIO}); therefore, we take $T_f=1$. We also take $T_\mathrm{ini}=1$, since the output matrix of the UIO~\eqref{eq:UIO_dynamics_general} is identity, which implies $l_\mathrm{UIO}=1$.

In what follows, it is assumed that inputs and outputs of the system~$\Gcal$ are accessible. 
The states are considered to be measured in the offline experiment to collect the historical data, but not accessible in real-time operation. 

\begin{remark}\label{rem:available_data}
Our assumption on the availability of the states is often fulfilled in a remote estimation scenario, where the observer is not collocated with the system. As such, it might be impossible, unsafe, or unfeasible for the system to transmit the state measurements to the observer in real time over a communication network. Instead, the \textit{historical} state data can be collected offline and transferred once and for all to the observer by using a different physical medium. 
Moreover, historical states can be measured once in dedicated lab experiments using sensors that can be costly to install in real-time applications. As cost reduction is a key driver in industry~\cite{mastellone2021impact}, it might be desirable to estimate states in online operations instead of adding sensors, especially if several copies of the same system are created. Finally note that infinitely many state-space realizations of~$\Gcal$ exist~\cite{van2012subspace}. In order to estimate the states of~$\Gcal$ uniquely in the absence of model knowledge, it is required to fix their basis, which is achieved by historical state measurements.
\end{remark}

\begin{definition}\label{def:compatible_trajectory}
	A trajectory $(\{v_i\}_{i=0}^{N-1},\{x_i\}_{i=0}^{N-1})$ is compatible with the historical data if
	\begin{equation}\label{eq:compatible_trajectory}
		\begin{bmatrix}
			v_i \\
			x_i \\
			v_{i+1} \\
			x_{i+1}
		\end{bmatrix} \in \mathrm{range}\left(\begin{bmatrix}
			V_p \\
			X_p \\
			V_f \\
			X_f
		\end{bmatrix}\right), \enskip \forall i\in\{0, 1, \dots, N-2\}.
	\end{equation}
	Moreover, the set of all trajectories compatible with given historical data $(\bar{v}, \bar{x})$ is defined as\vspace{-0.2cm}
	\begin{equation}\label{eq:set_compatible_trajectories}\vspace{-0.2cm}
	\mathbb{T}_c(\bar{v}, \bar{x}) \triangleq \{(\{v_i\}_{i=0}^{N-1},\{x_i\}_{i=0}^{N-1})|~\eqref{eq:compatible_trajectory}~\text{holds}\}.
	\end{equation}
\end{definition}\vspace{0.3cm}

We further introduce the set of all trajectories $(\{v_i\}_{i=0}^{N-1},\{x_i\}_{i=0}^{N-1})$ that can be generated by $\Gcal$:
\begin{equation}\label{eq:set_system_trajectories}
\begin{split}
\mathbb{T}_\Gcal \triangleq \{(\{[u_i^\top~&y_i^\top]^\top\}_{i=0}^{N-1},\{x_i\}_{i=0}^{N-1})| ~\exists \{d_i\}_{i=0}^{N-1} \\
&~\text{verifying}~\eqref{eq:LTI_dynamics}~\forall i\in\{0,1,\dots,N-2\}\}.
\end{split}
\end{equation}

Definition~\ref{def:compatible_trajectory} and equation~\eqref{eq:set_system_trajectories} are used for checking whether the historical data are representative of all input-output trajectories of $\Gcal$. Note that this is achieved when all trajectories of $\Gcal$ are compatible with the historical data, i.e., $\mathbb{T}_\Gcal = \mathbb{T}_c(\bar{v}, \bar{x})$. Indeed, if historical trajectories are very short or poorly chosen, the range of $[V_p^\top~X_p^\top~V_f^\top~X_f^\top]^\top$ might be very small and incompatible trajectories of $\Gcal$ might exist. 

In this paper, we assume all data to be noiseless in order to provide the theory for data-driven UIO\footnote{Noiseless historical data, which corresponds to perfect model knowledge, and noiseless recent (online) data are standard assumptions in the setting in which Luenberger observer and UIOs were originally developed. 
}. 
As discussed in Remark~\ref{rem:available_data}, in certain applications, historical data can be generated in dedicated experiments. In such cases, historical data can be assumed noiseless when sophisticated and accurate sensors are used.
The presence of measurement noise in recent data is discussed later in Remark~\ref{rem:noisy_recent_data}.\vspace{-0.1cm}

\section{Data-Driven UIO}\label{sec:DataDrivenUIO}

In this section, we present the proposed data-driven UIO formulation. Our method is enabled by the observation in Remark~\ref{rem:UIO_traj} that the input-output-state trajectories of $\Gcal$ also represent input-output trajectories of the UIO. Therefore, if a UIO exists, historical data collected from $\Gcal$ can be used to provide a data-driven representation of the trajectories of the UIO, when $\mathbb{T}_\Gcal = \mathbb{T}_c(\bar{v},\bar{x})$~\cite{de2019formulas}.

The presentation of our results is structured in three steps. In Lemma~\ref{lem:historical_data_span}, we give a sufficient condition for having $\mathbb{T}_\Gcal = \mathbb{T}_c(\bar{v},\bar{x})$. In Lemma~\ref{lem:ExistsSchurSystem}, we present necessary and sufficient conditions for the existence of a system of the form~\eqref{eq:UIO_dynamics_general} that generates all trajectories in $\mathbb{T}_c(\bar{v},\bar{x})$. Finally, Theorem~\ref{thm:UIO_and_estimations} characterizes the existence of a UIO and provides a data-driven UIO estimation scheme. The following assumption is required in the sequel.
\begin{assumption}\label{ass:pers_excit}
	The historical data $\{[\bar{u}_i^\top~~\bar{d}_i^\top]^\top\}_{i=0}^{T-1}$ are persistently exciting of order $n+2$.
\end{assumption}

\begin{lemma}\label{lem:historical_data_span}
	If Assumption~\ref{ass:pers_excit} holds, $\mathbb{T}_c(\bar{v}, \bar{x})=\mathbb{T}_\Gcal$.
\end{lemma}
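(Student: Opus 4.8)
The plan is to reduce the set equality to a statement about a single one-step window and then invoke the Fundamental Lemma with the \emph{full} input $w_t\triangleq[u_t^\top~d_t^\top]^\top$. Since $T_\mathrm{ini}=T_f=1$, the depth is $L=2$, so the stacked data matrix $M\triangleq[V_p^\top~X_p^\top~V_f^\top~X_f^\top]^\top$ is a row-permutation of the depth-$2$ Hankel matrix built from the stacked historical data $(\bar v,\bar x)$, and each of its columns is an actual window $(\bar v_j,\bar x_j,\bar v_{j+1},\bar x_{j+1})$. I would first observe that both $\mathbb{T}_c$ and $\mathbb{T}_\Gcal$ are characterized window-by-window: a trajectory lies in $\mathbb{T}_\Gcal$ iff for every $i$ there exists $d_i$ with $x_{i+1}=Ax_i+Bu_i+Ed_i$, $y_i=Cx_i$, $y_{i+1}=Cx_{i+1}$ (the $d_i$ are free and uncoupled across $i$), while a trajectory lies in $\mathbb{T}_c$ iff every window lies in $\mathrm{range}(M)$. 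Hence it suffices to prove
\[
\mathrm{range}(M)=\mathcal{W}\triangleq\{(v_0,x_0,v_1,x_1):\ y_0=Cx_0,\ y_1=Cx_1,\ x_1-Ax_0-Bu_0\in\mathrm{range}(E)\}.
\]

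The inclusion $\mathrm{range}(M)\subseteq\mathcal{W}$ is the elementary direction. The set $\mathcal{W}$ is a linear subspace, being cut out by the linear equations $y=Cx$ together with the condition $x_1-Ax_0-Bu_0\in\mathrm{range}(E)$; and every generating column of $M$ is a genuine historical window $(\bar v_j,\bar x_j,\bar v_{j+1},\bar x_{j+1})$, which satisfies these constraints with witness $\bar d_j$. Closure of $\mathcal{W}$ under linear combinations then gives the inclusion.

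For the reverse inclusion $\mathcal{W}\subseteq\mathrm{range}(M)$ I would use persistency of excitation. Under Assumption~\ref{ass:pers_excit}, $\bar w$ is PE of order $n+2=n+L$ and $(A,[B~E])$ is controllable by minimality, so the Fundamental Lemma applied to the system with input $w$ and output $x$ (equivalently $C=\I_n$) yields that the range of $[\Hcal_2(\bar u)^\top~\Hcal_2(\bar d)^\top~\Hcal_2(\bar x)^\top]^\top$ equals the set of all length-$2$ trajectories $(u_0,d_0,x_0,u_1,d_1,x_1)$ with $x_1=Ax_0+Bu_0+Ed_0$. Given any window $(v_0,x_0,v_1,x_1)\in\mathcal{W}$, I would select the witness $d_0$ and any $d_1$ (e.g. $d_1=0$); since $d_1$ is unconstrained within a length-$2$ window, this produces a valid trajectory of the $w$-to-$x$ system, and the Fundamental Lemma supplies a coefficient vector $g$ reproducing $(u_0,u_1,d_0,d_1,x_0,x_1)$.

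The crux, and the step deserving the most care, is that the compatibility condition never mentions $d$: it is phrased in $v=[u^\top~y^\top]^\top$ and $x$ only, whereas the Fundamental Lemma lives in $w=[u^\top~d^\top]^\top$ and $x$. The resolution is to reuse the \emph{same} $g$: because $\bar y_i=C\bar x_i$ implies $\Hcal_2(\bar y)=(\I_2\otimes C)\Hcal_2(\bar x)$, applying $g$ gives $\Hcal_2(\bar y)g=(Cx_0,Cx_1)=(y_0,y_1)$ by the defining constraints of $\mathcal{W}$. Combined with $\Hcal_2(\bar u)g=(u_0,u_1)$ and $\Hcal_2(\bar x)g=(x_0,x_1)$, this yields $Mg=(v_0,x_0,v_1,x_1)$, so the window lies in $\mathrm{range}(M)$. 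This establishes $\mathrm{range}(M)=\mathcal{W}$, and the window-by-window characterization then gives $\mathbb{T}_c(\bar v,\bar x)=\mathbb{T}_\Gcal$.
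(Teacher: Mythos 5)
Your proof is correct and follows essentially the same route as the paper's: both directions hinge on (i) each column of $[V_p^\top~X_p^\top~V_f^\top~X_f^\top]^\top$ being a genuine one-step window of $\Gcal$ and (ii) persistency of excitation of $[\bar u^\top~\bar d^\top]^\top$ plus controllability of $(A,[B~E])$ giving, via the state-space Fundamental Lemma, that $[U^\top~D^\top~X_p^\top]^\top$ has full row rank so that the same coefficient vector $g$ reproduces any target window (with $y=Cx$ carried along automatically). The only cosmetic difference is that you package the window constraints as the subspace $\mathcal{W}$ with the condition $x_1-Ax_0-Bu_0\in\mathrm{range}(E)$, whereas the paper writes the equivalent factorization explicitly through the matrix $\Theta$ and a row permutation.
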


\begin{proof}
	Since $v_t = [u_t^\top~y^\top_t]^\top$, 
	there exists a row permutation matrix $P_R$ such that 
	$$P_R\begin{bmatrix}
	v_t \\
	x_t \\
	v_{t+1} \\
	x_{t+1}
	\end{bmatrix} = \begin{bmatrix}
	u_{[t:t+1]} \\
	y_{[t:t+1]} \\
	x_{[t:t+1]}
	\end{bmatrix},$$ for any vector $[v_t^\top~x_t^\top~v_{t+1}^\top~x_{t+1}^\top]^\top$ corresponding to a trajectory of $\Gcal$. From~\eqref{eq:LTI_dynamics}, the variables on the right-hand side of the above equation verify
	\begin{equation}\label{eq:uyx_udx0}
	\begin{split}
	\begin{bmatrix}
	u_{[t:t+1]} \\
	y_{[t:t+1]} \\
	x_{[t:t+1]}
	\end{bmatrix} = \underbrace{\begin{bmatrix}
		\I & \0 & \0 \\
		\Tcal_{uy,2} & \Tcal_{dy,2} & \Ocal_{y,2} \\
		\Tcal_{ux,2} & \Tcal_{dx,2} & \Ocal_{x,2}
		\end{bmatrix}}_{\triangleq \Theta}\begin{bmatrix}
	u_{[t:t+1]} \\
	d_{[t:t+1]} \\
	x_{t}
	\end{bmatrix},
	\end{split}
	\end{equation}
	where \vspace{-0.2cm}
	\begin{equation*}
	\vspace{-0.2cm}
	\begin{split}
	\Tcal_{uy,2} = \begin{bmatrix}
	\0 & \0 \\
	CB & \0
	\end{bmatrix}, \enskip \Tcal_{dy,2} &= \begin{bmatrix}
	\0 & \0 \\
	CE & \0
	\end{bmatrix}, \enskip \Ocal_{y,2} = \begin{bmatrix}
	C \\
	CA 
	\end{bmatrix}, \\
	\Tcal_{ux,2} = \begin{bmatrix}
	\0 & \0 \\
	B & \0
	\end{bmatrix}, \enskip
	\Tcal_{dx,2} &= \begin{bmatrix}
	\0 & \0 \\
	E & \0
	\end{bmatrix}, \enskip 
	\Ocal_{x,2} = \begin{bmatrix}
	\I \\
	A 
	\end{bmatrix}.
	\end{split}
	\end{equation*}
	Therefore, for any trajectory $(\{v_i\}_{i=0}^{N-1},\{x_i\}_{i=0}^{N-1},\{d_i\}_{i=0}^{N-1})$ of the system $\Gcal$, it holds that, for all $t\in\{0,\dots,N-2\}$, \vspace{-0.2cm}
	\begin{equation}\label{eq:traj_PR_Theta}\vspace{-0.2cm}
	\begin{split}
	\begin{bmatrix}
	v_t \\
	x_t \\
	v_{t+1} \\
	x_{t+1}
	\end{bmatrix} = P_R^{-1}\Theta\begin{bmatrix}
	u_{[t:t+1]} \\
	d_{[t:t+1]} \\
	x_{t}
	\end{bmatrix}.
	\end{split}
	\end{equation}
	Moreover, given a sequence of inputs $(\{u_i\}_{i=0}^{N-1}, \{d_i\}_{i=0}^{N-1})$ and an initial state $x_0$, any sequence $(\{v_i\}_{i=0}^{N-1},\{x_i\}_{i=0}^{N-1})$ obtained by iteratively solving for the left-hand side of~\eqref{eq:traj_PR_Theta} for $t\in\{0,\dots,N-2\}$ is a trajectory of $\Gcal$. 
	For any set of historical data $(\bar{v}, \bar{x}, \bar{d})$ generated by $\Gcal$, it holds that\vspace{-0.1cm}
	\begin{equation}\label{eq:comp_PR_Theta}\vspace{-0.1cm}
	\begin{split}
	\begin{bmatrix}
	V_p \\
	X_p \\
	V_f \\
	X_f
	\end{bmatrix} = P_R^{-1}\Theta\begin{bmatrix}
	U \\
	D \\
	X_p
	\end{bmatrix}
	\end{split}
	\end{equation}
	because every column of the left-hand side of the above equation is a trajectory of $\Gcal$. Therefore, $\mathbb{T}_c(\bar{v}, \bar{x})\subseteq \mathbb{T}_\Gcal$. We next show that $\mathbb{T}_\Gcal\subseteq\mathbb{T}_c(\bar{v}, \bar{x})$. For this, it is sufficient to verify that for any trajectory $(\{v_i\}_{i=0}^{N-1},\{x_i\}_{i=0}^{N-1})$ of $\Gcal$, every vector $[v_t^\top~x_t^\top~v_{t+1}^\top~x_{t+1}^\top]^\top$ is in the range of $[V_p^\top~X_p^\top~V_{f}^\top~X_{f}^\top]^\top$. Under Assumption~\ref{ass:pers_excit}, Theorem~1 in~\cite{van2020willems} can directly be applied to show that $[U^\top~D^\top~X_p^\top]^\top$ has full row rank. As a direct consequence, given a vector $[u_{[t:t+1]}^\top~d_{[t:t+1]}^\top~x_{t}^\top]^\top$, there exists a vector $g_{t+1}$ such that \vspace{-0.1cm}
	$$\begin{bmatrix}
		U \\
		D \\
		X_p
		\end{bmatrix}g_{t+1} = \begin{bmatrix}
		u_{[t:t+1]} \\
		d_{[t:t+1]} \\
		x_{t}
		\end{bmatrix}.$$\vspace{-0.1cm} Then, multiplying~\eqref{eq:comp_PR_Theta} from the right by $g_{t+1}$ yields $$\begin{bmatrix}
		V_p \\
		X_p \\
		V_f \\
		X_f
		\end{bmatrix} g_{t+1} = \begin{bmatrix}
		v_t \\
		x_t \\
		v_{t+1} \\
		x_{t+1}
		\end{bmatrix},$$ where the vector $[v_t^\top~x_t^\top~v_{t+1}^\top~x_{t+1}^\top]^\top$ satisfies~\eqref{eq:traj_PR_Theta}. Since any trajectory of $\Gcal$ consists of $v_t$, $x_t$, $v_{t+1}$, $x_{t+1}$ satisfying~\eqref{eq:traj_PR_Theta}, one gets $\mathbb{T}_\Gcal\subseteq\mathbb{T}_c(\bar{v}, \bar{x})$.
\end{proof}

\begin{remark}\label{rem:d_pers_excit}
	Lemma~\ref{lem:historical_data_span} requires persistency of excitation of the unknown inputs $\bar{d}$, which is not verifiable using the available data. This assumption can be satisfied when the unknown inputs cannot be measured or modified, but change randomly. It is also satisfied if $\bar{d} = \bar{d}_0 + \delta\bar{d}$, where $\bar{d}_0$ is a (not necessarily exciting) deterministic component and $\delta\bar{d}$ is a small random component. For example, in DCmGs, unknown inputs include the current loads connected to generation units (see Section~\ref{sec:MicrogridExample}). Loads are dictated by current consumption which can be assumed to have a random component\footnote{Loads might include aggregated domestic consumption based on complex daily activity patterns of many consumers, which can be assumed stochastic. Load currents are also affected by noise terms that are induced by switches in power-electronics converters used for connecting loads.}. Persistency of excitation can also be satisfied when $\delta\bar{d}$ belongs to certain classes of deterministic signals such as pseudo-random binary sequences (PRBSs), and sums of sinusoids~\cite[Chapter~5]{soderstrom1989system}.
\end{remark}

In the following, we make use of the vector $g_{t+1}$ solving \vspace{-0.1cm}
\begin{equation}\label{eq:g_sol}\vspace{-0.1cm}
\begin{bmatrix}
V_p \\
X_p \\
V_f
\end{bmatrix}g_{t+1} = \begin{bmatrix}
v_t \\
x_t \\
v_{t+1}
\end{bmatrix}
\end{equation} 
for given $V_p$, $X_p$, $V_f$ and a compatible recent trajectory $v_{[t:t+1]},$ $x_t$. All solutions to~\eqref{eq:g_sol} can be written as\vspace{-0.1cm}
\begin{equation}\label{eq:g_sol_aff}\vspace{-0.1cm}
g_{t+1}=\Xi[v_t^\top~x_t^\top~v_{t+1}^\top]^\top+\nu,
\end{equation}
for a vector $\nu \in\ker([V_p^\top~X_p^\top~V_f^\top]^\top)$ and a properly defined matrix $\Xi$. There are infinitely many such matrices and a particular choice is $([V_p^\top~X_p^\top~V_f^\top]^\top)^\dagger$.
We partition this matrix as $\Xi=[\Xi_{V_p} ~ \Xi_{X_p} ~ \Xi_{V_f}]$, where $\Xi_{V_p}$, $\Xi_{X_p}$, and $\Xi_{V_f}$ have $m+p$, $n$, and $m+p$ columns, respectively.

\begin{lemma}\label{lem:ExistsSchurSystem}
	There exists an LTI system of the form~\eqref{eq:UIO_dynamics_general} 
	that can generate every compatible input-output trajectory $(\{v_i\}_{i=0}^{N-1},\{x_i\}_{i=0}^{N-1})$ if and only if \vspace{-0.1cm}
	\begin{equation}\label{eq:Hankel_existence_condition}\vspace{0.1cm}
	\ker\left(\begin{bmatrix}
	V_p \\
	X_p \\
	V_f
	\end{bmatrix}\right) \subseteq \ker(X_f).
	\end{equation}
\end{lemma}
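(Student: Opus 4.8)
The plan is to prove the biconditional by directly constructing the matrices $A_\mathrm{UIO}$, $B_\mathrm{UIO}$, $D_\mathrm{UIO}$ when the kernel condition holds, and conversely showing that the existence of such a system forces the kernel inclusion. The key conceptual link is the one-step-ahead prediction map furnished by $g_{t+1}$ in~\eqref{eq:g_sol}: any compatible trajectory supplies, at each step, a vector $g_{t+1}$ that reproduces $[v_t^\top~x_t^\top~v_{t+1}^\top]^\top$, and multiplying this same $g_{t+1}$ by $X_f$ yields a candidate value for $x_{t+1}$. The entire question is whether this predicted $x_{t+1}$ is \emph{well-defined}, i.e. independent of which solution $g_{t+1}$ of~\eqref{eq:g_sol} we pick --- and that is exactly what the condition~\eqref{eq:Hankel_existence_condition} controls.

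\medskip
\noindent\textbf{Sufficiency.} Assume~\eqref{eq:Hankel_existence_condition} holds. First I would argue that $X_f\,g_{t+1}$ is independent of the choice of solution: if $g$ and $g'$ both solve~\eqref{eq:g_sol}, then $g-g'\in\ker([V_p^\top~X_p^\top~V_f^\top]^\top)\subseteq\ker(X_f)$, so $X_f g = X_f g'$. Using the affine parametrization~\eqref{eq:g_sol_aff}, the map $[v_t^\top~x_t^\top~v_{t+1}^\top]^\top \mapsto X_f g_{t+1} = X_f\Xi[v_t^\top~x_t^\top~v_{t+1}^\top]^\top$ is then a well-defined linear map (the $\nu$-term vanishes under $X_f$), and with the partition $\Xi=[\Xi_{V_p}~\Xi_{X_p}~\Xi_{V_f}]$ this reads $x_{t+1}=X_f\Xi_{V_p}v_t+X_f\Xi_{X_p}x_t+X_f\Xi_{V_f}v_{t+1}$. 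Now I would recast this recursion into the form~\eqref{eq:UIO_dynamics_general}. Since every compatible trajectory satisfies $[v_t^\top~x_t^\top~v_{t+1}^\top~x_{t+1}^\top]^\top\in\mathrm{range}([V_p^\top~X_p^\top~V_f^\top~X_f^\top]^\top)$, the $g_{t+1}$ reconstructing the first three blocks automatically reproduces $x_{t+1}=X_f g_{t+1}$ as the fourth block, so the derived recursion is satisfied by every compatible trajectory. Defining $\hat x_t=z_t+D_\mathrm{UIO}v_t$ with an appropriate reading of $D_\mathrm{UIO}$ from $X_f\Xi_{V_f}$ (the direct-feedthrough term on the current $v$) and collecting the remaining coefficients into $A_\mathrm{UIO},B_\mathrm{UIO}$, I obtain a system of the required shape; one must then check, using $x_t=z_t+D_\mathrm{UIO}v_t$ as the state substitution, that the resulting $(A_\mathrm{UIO},B_\mathrm{UIO},D_\mathrm{UIO})$ generates the trajectory. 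The bookkeeping of which coefficient lands in $A$, $B$, or $D$ is the fiddly part, but it is purely algebraic given the independence-of-$g$ fact.

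\medskip
\noindent\textbf{Necessity.} Conversely, suppose a system~\eqref{eq:UIO_dynamics_general} generates every compatible trajectory. I would take an arbitrary $\nu\in\ker([V_p^\top~X_p^\top~V_f^\top]^\top)$ and show $X_f\nu=0$. The idea is that each column of $[V_p^\top~X_p^\top~V_f^\top~X_f^\top]^\top$ is (a one-step piece of) a compatible trajectory, so two vectors $g,g'$ differing by $\nu$ generate trajectories that agree on $v_t,x_t,v_{t+1}$ but whose fourth blocks are $X_fg$ and $X_fg'$. Because the generating system~\eqref{eq:UIO_dynamics_general} is deterministic --- $z_{t+1}$ is a fixed function of $z_t$ and $v_t$, and $\hat x_{t+1}=z_{t+1}+D_\mathrm{UIO}v_{t+1}$ --- the state $x_{t+1}$ is uniquely determined by $x_t$ (equivalently $z_t$), $v_t$, and $v_{t+1}$. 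Hence the two fourth blocks must coincide, giving $X_f g=X_f g'$, i.e. $X_f\nu=0$, so $\nu\in\ker(X_f)$ and~\eqref{eq:Hankel_existence_condition} follows.

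\medskip
\noindent\textbf{The main obstacle} will be the necessity direction, specifically pinning down the claim that $x_{t+1}$ is a \emph{function} of $(x_t,v_t,v_{t+1})$ alone. This requires care because the state $z_t$ of~\eqref{eq:UIO_dynamics_general} need not coincide with $x_t$; I must invoke that the output map $\hat x_t=z_t+D_\mathrm{UIO}v_t$ lets me recover $z_t$ from $x_t$ and $v_t$ (using $\hat x_t=x_t$ along the generated trajectory), and that $T_\mathrm{ini}=1$ suffices to fix $z_t$ --- precisely the observability-index-one remark made earlier for the UIO. The sufficiency direction, by contrast, is essentially a matter of reading off coefficients once the well-definedness of $X_f g_{t+1}$ is established, so the conceptual weight of the proof sits in translating ``uniqueness of the predicted next state'' into the kernel inclusion and back.
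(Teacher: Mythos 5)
Your proof is correct, and the sufficiency direction is essentially identical to the paper's: you establish that $X_f g_{t+1}$ is independent of the choice of solution to~\eqref{eq:g_sol} via the kernel inclusion, obtain the linear recursion $x_{t+1}=X_f\Xi_{V_p}v_t+X_f\Xi_{X_p}x_t+X_f\Xi_{V_f}v_{t+1}$, and repackage it into the form~\eqref{eq:UIO_dynamics_general} by the substitution $x_t=z_t+D_\mathrm{UIO}v_t$; the ``fiddly bookkeeping'' you defer is exactly the paper's computation yielding $A_\mathrm{UIO}=X_f\Xi_{X_p}$, $B_\mathrm{UIO}=X_f(\Xi_{V_p}+\Xi_{X_p}X_f\Xi_{V_f})$, $D_\mathrm{UIO}=X_f\Xi_{V_f}$. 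Where you diverge is the necessity direction. The paper introduces the generating system's internal state data $\bar z$ and its Hankel blocks $Z_p,Z_f$, writes the three relations $Z_f=A_\mathrm{UIO}Z_p+B_\mathrm{UIO}V_p$, $X_p=Z_p+D_\mathrm{UIO}V_p$, $X_f=Z_f+D_\mathrm{UIO}V_f$, and eliminates $Z_p,Z_f$ to obtain the explicit factorization $X_f=[\,B_\mathrm{UIO}-A_\mathrm{UIO}D_\mathrm{UIO}~~A_\mathrm{UIO}~~D_\mathrm{UIO}\,][V_p^\top~X_p^\top~V_f^\top]^\top$, from which the kernel inclusion is immediate. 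You instead test an arbitrary kernel vector $\nu$ directly: the pair $g$ and $g+\nu$ yield two compatible length-two trajectories agreeing on $(v_t,x_t,v_{t+1})$, and the determinism of~\eqref{eq:UIO_dynamics_general} --- which you correctly reduce to the recoverability of $z_t$ from $(x_t,v_t)$ via the identity output map --- forces $X_fg=X_f(g+\nu)$. Both arguments rest on the same structural fact ($x_{t+1}$ is a function of $(x_t,v_t,v_{t+1})$); the paper's version has the side benefit of exhibiting the factorizing matrix explicitly, while yours is slightly more abstract and avoids introducing the $Z$ Hankel matrices altogether. Your closing worry about pinning down the determinism claim is already resolved by your own observation that $z_t=x_t-D_\mathrm{UIO}v_t$, so no gap remains.
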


\begin{proof}
	($\impliedby$) We show the existence of a system~\eqref{eq:UIO_dynamics_general} with matrices\vspace{-0.1cm}
	\begin{equation}\label{eq:UIO_matrices}\vspace{-0.1cm}
	\begin{split}
	A_\mathrm{UIO}=X_f\Xi_{X_p},~B_\mathrm{UIO}&=X_f(\Xi_{V_p} + \Xi_{X_p}X_f\Xi_{V_f}),\\
	D_\mathrm{UIO}&=X_f\Xi_{V_f}.
	\end{split}
	\end{equation}
	Note that every compatible trajectory is a sequence of input-output data $v_t$ and $x_t$,
	and verifies~\eqref{eq:compatible_trajectory}. If~\eqref{eq:Hankel_existence_condition} holds, the vector $x_{t+1}$ is uniquely determined by $X_fg_{t+1}$ for any vector $g_{t+1}$ fulfilling~\eqref{eq:g_sol_aff}.
	Therefore, for any compatible trajectory and $t$, $x_{t+1}$ is given by \vspace{-0.1cm}
	\begin{equation}\label{eq:xtp1_linear_func}\vspace{-0.1cm}
		x_{t+1} = X_f\Xi_{V_p}v_t + X_f\Xi_{X_p}x_t + X_f\Xi_{V_f}v_{t+1},
	\end{equation}
	since $\nu\in\ker([V_p^\top~X_p^\top~V_f^\top]^\top)\subseteq\ker(X_f)$. On defining $z_{t+1} \triangleq X_f\Xi_{V_p}v_t + X_f\Xi_{X_p}x_t$ and replacing the time index $t+1$ with $t$, equation~\eqref{eq:xtp1_linear_func} reduces to $x_t=z_t+X_f\Xi_{V_f}v_t$ which is the output equation in~\eqref{eq:UIO_dynamics_general} with $D_\mathrm{UIO}$ in~\eqref{eq:UIO_matrices}. Replacing $x_t$ with $z_t+X_f\Xi_{V_f}v_t$ in the definition of $z_{t+1}$ yields the state update in~\eqref{eq:UIO_dynamics_general} with $A_\mathrm{UIO}$ and $B_\mathrm{UIO}$ matrices in~\eqref{eq:UIO_matrices}. As such, the relation~\eqref{eq:xtp1_linear_func} between the elements of the tuple $(v_t, x_t, v_{t+1}, x_{t+1})$ is equivalently represented as the relation between the inputs and outputs of the system in~\eqref{eq:UIO_dynamics_general} with the matrices $(A_\mathrm{UIO},B_\mathrm{UIO},D_\mathrm{UIO})$ in~\eqref{eq:UIO_matrices} and the initial state $z_0 = x_0-D_\mathrm{UIO}v_0$. 
	
	($\implies$) Note that the system in~\eqref{eq:UIO_dynamics_general} generates all trajectories compatible with the historical data; therefore, the columns of $[V_p^\top~X_p^\top~V_f^\top~X_f^\top]^\top$ represent input-output trajectories of this system. Denote its corresponding historical state data by $\bar{z}\triangleq\mathrm{col}(\{\bar{z}_i\}_{i=0}^{T-1})$, which define the matrices $Z$, $Z_p$, and $Z_f$ as in~\eqref{eq:Hankel_IOS},~\eqref{eq:past_future_Hankel}. Since it holds that $Z_f = A_\mathrm{UIO}Z_p + B_\mathrm{UIO}V_p$, $X_p = Z_p + D_\mathrm{UIO}V_p$, and $X_f = Z_f + D_\mathrm{UIO}V_f$, one gets\vspace{-0.1cm}
	\begin{equation*}\vspace{-0.1cm}
		\begin{split}
		X_f =& (B_\mathrm{UIO}-A_\mathrm{UIO}D_\mathrm{UIO})V_p + A_\mathrm{UIO}X_p + D_\mathrm{UIO}V_f \\
		=& \begin{bmatrix}
		B_\mathrm{UIO}-A_\mathrm{UIO}D_\mathrm{UIO} & A_\mathrm{UIO} & D_\mathrm{UIO}
		\end{bmatrix}\begin{bmatrix}
		V_p \\
		X_p \\
		V_f
		\end{bmatrix}.
		\end{split}
	\end{equation*}
	This, in turn, implies~\eqref{eq:Hankel_existence_condition}, concluding the proof.
\end{proof}

Next, we discuss the existence of a UIO and provide a data-driven unknown-input state-estimation scheme. 

\begin{theorem}[Data-driven UIO]\label{thm:UIO_and_estimations} 
	Suppose that Assumption~\ref{ass:pers_excit} holds. There exists a UIO of the form~\eqref{eq:UIO_dynamics_general} with the matrices in~\eqref{eq:UIO_matrices} 
	if and only if~\eqref{eq:Hankel_existence_condition} holds and $X_f\Xi_{X_p}$ is Schur stable. Moreover, for any $\hat{x}_0\in\RR{n}$, the state estimations $\hat{x}_{t+1}$, $t=0,1,\dots$ computed through the iterative formula\vspace{-0.1cm}
	\begin{equation}\label{eq:iterative_xhat}\vspace{-0.1cm}
		\hat{x}_{t+1} = X_f\Xi[u_t^\top~
		y_t^\top~
		\hat{x}_t^\top~
		u_{t+1}^\top~
		y_{t+1}^\top]^\top
	\end{equation}
	asymptotically converge to the state $x_{t+1}$ of $\Gcal$.
\end{theorem}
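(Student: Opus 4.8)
The plan is to split the statement into an existence equivalence and a convergence claim, and to build both on Lemmas~\ref{lem:historical_data_span} and~\ref{lem:ExistsSchurSystem}. Under Assumption~\ref{ass:pers_excit}, Lemma~\ref{lem:historical_data_span} identifies compatible trajectories with trajectories of $\Gcal$, so Lemma~\ref{lem:ExistsSchurSystem} can be read as: a system of the form~\eqref{eq:UIO_dynamics_general} with matrices~\eqref{eq:UIO_matrices} reproduces every trajectory of $\Gcal$ if and only if~\eqref{eq:Hankel_existence_condition} holds. I would therefore interpret~\eqref{eq:Hankel_existence_condition} as exactly the unknown-input decoupling / exact-reproduction requirement, leaving only asymptotic error stability to be matched against Definition~\ref{def:UIO}.

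The core computation I would carry out is to subtract the iterative update~\eqref{eq:iterative_xhat}, rewritten using the partition of $\Xi$ as $\hat{x}_{t+1}=X_f\Xi_{V_p}v_t+X_f\Xi_{X_p}\hat{x}_t+X_f\Xi_{V_f}v_{t+1}$, from the exact one-step relation~\eqref{eq:xtp1_linear_func}, which holds along any compatible (hence, by Lemma~\ref{lem:historical_data_span}, any $\Gcal$) trajectory. The $v_t$ and $v_{t+1}$ contributions cancel, leaving the autonomous recursion $e_{t+1}=X_f\Xi_{X_p}e_t=A_\mathrm{UIO}e_t$ for the error $e_t\triangleq x_t-\hat{x}_t$. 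Assuming~\eqref{eq:Hankel_existence_condition} and $X_f\Xi_{X_p}$ Schur, this yields $e_t=A_\mathrm{UIO}^t e_0\to 0$ for every $\hat{x}_0$ and every $\Gcal$ trajectory, i.e.\ for every $x_0$, $u$, and $d$; since~\eqref{eq:iterative_xhat} never uses $d$, this is precisely the UIO property and simultaneously establishes the ``moreover'' convergence claim.

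For the converse I would argue that if the system~\eqref{eq:UIO_dynamics_general} with matrices~\eqref{eq:UIO_matrices} is a UIO, then by Remark~\ref{rem:UIO_traj} it reproduces every trajectory of $\Gcal$, i.e.\ it generates all compatible trajectories, so the necessity direction of Lemma~\ref{lem:ExistsSchurSystem} forces~\eqref{eq:Hankel_existence_condition}. To obtain the Schur condition, I would initialize the observer for exact tracking (so that $e_0=0$, using $z_0=x_0-D_\mathrm{UIO}v_0$) and then perturb $z_0$ by an arbitrary $\delta$; linearity of~\eqref{eq:UIO_dynamics_general} gives $e_t=-A_\mathrm{UIO}^t\delta$, and the UIO requirement $e_t\to 0$ for all $\delta$ forces $A_\mathrm{UIO}=X_f\Xi_{X_p}$ to be Schur stable.

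The step I expect to need the most care is justifying that the exact relation~\eqref{eq:xtp1_linear_func} really holds along the \emph{online} trajectory: this relies on~\eqref{eq:Hankel_existence_condition} making $X_f g_{t+1}$ independent of the particular solution $g_{t+1}$ of~\eqref{eq:g_sol}, and on Lemma~\ref{lem:historical_data_span} guaranteeing that the online trajectory of $\Gcal$ is indeed compatible (so that some such $g_{t+1}$ exists). I would also be careful, throughout the necessity argument, to keep the decoupling conclusion anchored to the fixed, data-computed matrices~\eqref{eq:UIO_matrices} rather than to any $d$-dependent quantity, since it is precisely the independence of~\eqref{eq:iterative_xhat} from $d$ that upgrades ``convergence for every $\Gcal$ trajectory'' to the unknown-input property in Definition~\ref{def:UIO}.
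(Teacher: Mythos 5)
Your proposal is correct and follows essentially the same route as the paper: sufficiency combines Lemma~\ref{lem:historical_data_span} and Lemma~\ref{lem:ExistsSchurSystem} to justify the exact one-step relation~\eqref{eq:xtp1_linear_func} along the online trajectory and then extracts the autonomous error dynamics $e_{t+1}=X_f\Xi_{X_p}e_t$ (the paper phrases this as two outputs of system~\eqref{eq:UIO_dynamics_general} from different initial states, you phrase it as a direct subtraction of the two recursions, which is the same computation), while necessity invokes Remark~\ref{rem:UIO_traj} plus the necessity direction of Lemma~\ref{lem:ExistsSchurSystem} for~\eqref{eq:Hankel_existence_condition} and the convergence-for-all-initializations requirement for Schur stability. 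Your explicit attention to why~\eqref{eq:xtp1_linear_func} holds online (compatibility via Lemma~\ref{lem:historical_data_span} and invariance of $X_f g_{t+1}$ via~\eqref{eq:Hankel_existence_condition}) is exactly the load-bearing step the paper relies on.
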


\begin{proof}
	($\impliedby$) When Assumption~\ref{ass:pers_excit} and condition~\eqref{eq:Hankel_existence_condition} are satisfied, Lemmas~\ref{lem:historical_data_span} and~\ref{lem:ExistsSchurSystem} guarantee that the system~\eqref{eq:UIO_dynamics_general} with matrices given in~\eqref{eq:UIO_matrices} can generate any compatible trajectory, hence, any trajectory of $\Gcal$. Next, we focus on the iterative process~\eqref{eq:iterative_xhat} of computing estimations $\hat{x}_t$ from an initial condition $\hat{x}_0$ for any input $u$ and unknown input $d$. As described in the proof of Lemma~\ref{lem:ExistsSchurSystem}, this process is equivalent to generating output trajectories of the system in~\eqref{eq:UIO_dynamics_general} with the initial state $z_0 = \hat{x}_0 - D_\mathrm{UIO}[u_0^\top~y_0^\top]^\top$ and inputs $v_t = [u_t^\top~y_t^\top]^\top$. That proof also shows that the actual state $x_t$ of $\Gcal$ corresponds to the output of the same system with the same inputs but a different initial state: $z_0^\prime = x_0 - D_\mathrm{UIO}[u_0^\top~y_0^\top]^\top$. The state estimation error $e= x-\hat{x}$ is the difference between these two outputs of~\eqref{eq:UIO_dynamics_general}, which follows the autonomous dynamics $e_{t+1} = A_\mathrm{UIO}e_{t}$. If $A_\mathrm{UIO}$ is Schur stable, this error converges to zero and the LTI system in~\eqref{eq:UIO_dynamics_general} is a UIO by Definition~\ref{def:UIO}.
	
	($\implies$) From Definition~\ref{def:UIO} and Remark~\ref{rem:UIO_traj}, a UIO has Schur stable dynamics. Using Lemma~\ref{lem:ExistsSchurSystem}, existence of a UIO of the form~\eqref{eq:UIO_dynamics_general} implies~\eqref{eq:Hankel_existence_condition}.
\end{proof}

	Note that all $\Xi$ matrices such that $g_{t+1}$ in~\eqref{eq:g_sol_aff} verifies~\eqref{eq:g_sol} can be characterized as $\Xi = \Xi_0 + \Delta$, where $\Xi_0=([V_p^\top~X_p^\top~V_f^\top]^\top)^\dagger$ and $\Delta$ is any matrix such that $\mathrm{range}(\Delta)\subseteq\ker([V_p^\top~X_p^\top~V_f^\top]^\top)$. Under~\eqref{eq:Hankel_existence_condition}, it also holds that $\mathrm{range}(\Delta)\subseteq\ker(X_f)$. This implies that whether a UIO exists and, if yes, its matrices in~\eqref{eq:UIO_matrices}, are independent of the particular choice of $\Xi$.

\begin{remark}\label{rem:dd_vs_model-based}
	Unlike the proposed data-driven UIO, existing model-based design procedures provide a degree of freedom in choosing UIO matrices~\cite{valcher1999state, chen1996design}, which can be exploited to tune the estimation performance. Therefore, our UIO with matrices~\eqref{eq:UIO_matrices} corresponds to one specific choice that can be achieved using model-based design methods.
\end{remark}

\begin{remark}\label{rem:noisy_recent_data}
	If recent data $\{v_i\}_{i=0}^{N-1}=\{[u_i^\top~y_i^\top]^\top\}_{i=0}^{N-1}$ are affected by noise, the recursive algorithm~\eqref{eq:iterative_xhat} results in estimation errors. Note that~\eqref{eq:iterative_xhat} is equivalent to computing output trajectories of the UIO~\eqref{eq:UIO_dynamics_general} with matrices in~\eqref{eq:UIO_matrices}. Therefore, the noise in recent data acts as an input disturbance to~\eqref{eq:UIO_dynamics_general}, i.e., $\tilde{v}_t = v_t + w_t$ is applied as input instead of $v_t$, where $w_t$ is the measurement noise. Standard LTI system theory can be used to analyze the estimation error, which is the perturbation on the output of~\eqref{eq:UIO_dynamics_general} caused by $w_t$.
\end{remark}

We next provide an application example to demonstrate the use of the proposed method on DCmGs. We also show that it can be used for distributed cyber-attack detection.\vspace{-0.2cm}

\section{Distributed State Estimation in DCmGs} \label{sec:MicrogridExample}

An mG is an electrical network of distributed generation units (DGUs) and loads, capable to work either in grid-connected or islanded mode. Islanded mGs are usually controlled via hierarchical control schemes, where the primary controllers, often decentralized~\cite{nahata2020passivity}, provide voltage regulation, and higher-level controllers perform DGU coordination through a distributed architecture utilizing a communication network~\cite{tucci2018stable}. Network links can be compromised by cyber attacks. A distributed cyber-attack detection scheme comprising attack monitors collocated with every DGU has been proposed in~\cite{gallo2018distributed}. The key ingredients of local monitors are UIOs, used for estimating the state of neighboring DGUs. Hereafter, we use the proposed data-driven UIOs to replace the model-based ones in~\cite{gallo2018distributed} and show their effectiveness. This would eliminate the need for constructing accurate models of DGUs, which can be costly or require expertise.

\begin{figure}[t]
	\centering
	\ctikzset{bipoles/length=0.65cm}
	\tikzstyle{every node}=[font=\tiny]
	\begin{tikzpicture}[american currents, scale=0.5]
	\draw (1.5,4)
	to [short](1.5,4.5)
	to [short](3.5,4.5)
	to [short](3.5,0.5)
	to [short](1.5,0.5)
	to [short](1.5,4)
	to [short](1,4)
	to [battery, o-o](1,1)
	to [short](1.5,1)
	to [short](1,1);
	\node at (2.5,2.5){\footnotesize \textbf{Buck $i$}};
	\draw[-latex] (4,1.25) -- (4,3.75)node[midway,right]{$V_{ti}$};
	\draw (3.5,4) to [short](4,4)
	to [short](4.5,4)
	to [R=$R_{ti}$] (6,4)
	to [L=$L_{ti}$] (7.5,4)
	to [short, i=$I _{ti}$, -] (8.5,4)
	to [short](9,4)
	to [C, a=$C_{ti}$, -] (9,1)
	to [short](4,1)
	to [short](3.5,1);
	\draw (12.3,4)  to [R=$R_{ij}$] (14.7,4) 
	to [short, i=$I _{ij}$, -] (14.8,4)
	to [L=$L_{ij}$] (16.5,4)
	to [short, -o] (17.1,4) node[anchor=north,above]{$PCC_j$};
	\draw (8.5,4) to [short,-o] (11,4)
	to [dcisource, I ,a=$I_{Li}$] (11 ,1)
	to [short] (9,1);
	\draw[-latex] (11.5,1.25) -- (11.5,3.75)node[midway,right]{$V_i$};
	\draw (11,1)
	to [short, -o] (17.1,1);
	\draw (11,4) to [short](11.5,4);
	\draw (11,4) node[anchor=north, above]{$PCC_i$}  to [short](11,2.9);
	\draw (11,4) to [short](12.5,4);
	\draw[black, dashed] (.5,.25) -- (12.45,.25) -- (12.45,5.5) -- (.5,5.5)node[sloped, midway, above]{{\footnotesize \textbf{DGU and Load $i$ }}}  -- (.5,.25);
	\draw[black, dashed] (12.7,.25) -- (16.3,.25) -- (16.3,5.5) -- (12.7,5.5)node[sloped, midway, above]{{\footnotesize \textbf{Power line $ij$}}}  -- (12.7,.25);
	\end{tikzpicture}
	\caption{Electrical scheme of $i^{th}$ DGU with connecting line(s).}
\label{fig:DGUi}\vspace{-0.5cm}
\end{figure}
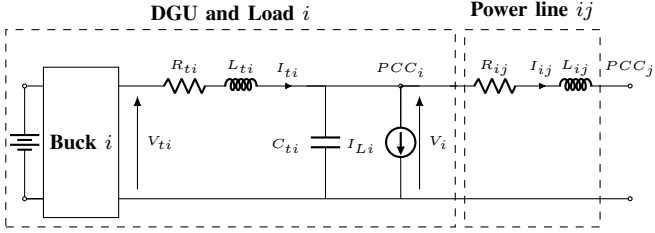

The electrical scheme of a DGU in a DCmG is given in Figure~\ref{fig:DGUi}, which defines relevant electrical parameters and variables (we refer the reader to~\cite{nahata2020passivity} for a comprehensive description of these quantities). When equipped with the primary controllers proposed in~\cite{nahata2020passivity}, the continuous-time dynamics of a DGU is $\dot{x} = A_cx + E_cd$, with $x \triangleq [V_i~I_{ti}~v_i]^\top$, $d = [I_{\mathrm{net},i}+I_{Li}~~V_{\mathrm{ref},i} + \alpha_i]^\top$, and \vspace{-0.15cm}
\small
\begin{equation}\label{eq:mg_matrices}\vspace{-0.05cm}
\begin{split}
&~~~A_c = \begin{bmatrix}
0 & \frac{1}{C_{ti}} & 0 \\
\frac{k_{i,1}-1}{L_{ti}} & \frac{k_{i,2}-R_{ti}}{L_{ti}} & \frac{k_{i,3}}{L_{ti}} \\
-1 & 0 & 0
\end{bmatrix},~
E_c = \begin{bmatrix}
-\frac{1}{C_{ti}} & 0 \\
0 & 0 \\
0 & 1
\end{bmatrix}.
\end{split}
\end{equation}
\normalsize
In particular, $k_{i,1}$, $k_{i,2}$, $k_{i,3}\in\RR{}$ are the parameters of the primary controller and $v_i$ is an integrator state introduced for penalizing the deviations of the output voltage $V_i$ from the reference $V_{\mathrm{ref},i}$~\cite{nahata2020passivity}. $I_{\mathrm{net},i}=\sum_{j\in\Ncal_i}I_{ij}$ is the net current injected into the mG by DGU $i$, where $\Ncal_i$ is the set of neighbors of DGU $i$\footnote{Neighbors are DGUs connected to DGU $i$ via a power line (see Figure~\ref{fig:DGUi}).}. Moreover, $\alpha_i$ is the output of a distributed secondary controller~\cite{tucci2018stable}. As in~\cite{gallo2018distributed}, we assume all states are measured and transmitted to the neighboring units. The unknown inputs can be measured; however, they are not sent to the neighboring units for security and privacy reasons. Indeed, transmitting these variables in real time would make them vulnerable to cyber-attacks, thus compromising the purpose of attack detection. Moreover, sharing historical data $\bar{d}_i$ with neighboring units might cause privacy violations. Indeed, the loads $I_{Li}$ often correspond to consumption, which can reveal the occupancy and daily activities of the consumers~\cite{hart1992nonintrusive}. Furthermore, the variables $I_{\mathrm{net},i}$ and $\alpha_i$ may contain sensitive information regarding the neighbors of DGU $i$, which might not be desirable to share. 

By using exact discretization, the discrete-time model of a DGU is given by system~\eqref{eq:LTI_dynamics} with\footnote{Hereafter, we omit the subscript $i$ as it is irrelevant for the UIO design.}\vspace{-0.2cm}
\begin{equation}\label{eq:mg_matrices_discrete}
	A = e^{A_cT_s}, ~ B = \0, ~ E = \left(\int_{\tau=0}^{T_s}e^{A\tau}d\tau\right)E_c, ~ C=\I 
\end{equation}
for a sampling period $T_s>0$, which we assume to be $10~ms$ in our experiments. 
At each time step $t$, the neighboring DGU $j$ receives the following communicated output from DGU $i$\vspace{-0.3cm}
\begin{equation}\label{eq:y^c}
y^c_t = y_t + \phi_t,
\end{equation}
where $\phi_t$ is the additive cyber attack vector at time $t$. $T_a$ denotes the start of the attack; therefore, $\phi_t$ is zero for all $t<T_a$, and non-zero for, at least, a time instant $t\geq T_a$. 

As in~\cite{gallo2018distributed}, we are interested in building a monitor collocated with the neighbor $j$ of DGU $i$, that estimates the states $x$ of DGU $i$ from the communicated outputs $y^c$ by assuming \textit{safe operation}. i.e., that there are no attacks, and therefore $y_t^c=y_t$. This corresponds to the problem of designing a UIO for the system in~\eqref{eq:LTI_dynamics} with the matrices in~\eqref{eq:mg_matrices_discrete}. 

We collect historical data by initializing the DGU from a random state. These data are not affected by attacks, as they are collected and sent to the neighboring units offline (see Remark~\ref{rem:available_data}). As discussed in Remark~\ref{rem:d_pers_excit}, it is sufficient that $V_{\mathrm{ref},i}$ and $I_{Li}$ have stochastic components to verify Assumption~\ref{ass:pers_excit}. This can indeed be satisfied as $V_{\mathrm{ref},i}$ is a free variable and $I_{Li}$ is the load current, which can be assumed to have a stochastic element as discussed in Remark~\ref{rem:d_pers_excit}.

The historical data verifies the conditions in Theorem~\ref{thm:UIO_and_estimations} for the existence of a UIO; therefore,
\eqref{eq:iterative_xhat} can be used to compute state estimates. 
This is expected, since a model-based UIO also exists for the same system~\cite{gallo2018distributed}.
We initialize the DCmG from a random initial condition, and simulate it for $N=10$ time steps with no attack and $d=d_0+\delta d$, where $d_0$ is a nominal vector and $\delta d$ is a small random component.
As shown in Figure~\ref{fig:safe_states}, the estimates quickly converge to the real states. 
In view of Remark~\ref{rem:dd_vs_model-based}, the same UIO estimations can also be obtained by a model-based design procedure in case DGU matrices~\eqref{eq:mg_matrices} are known.

\begin{figure}[t]
	\centering
	\includegraphics[width=0.45\textwidth]{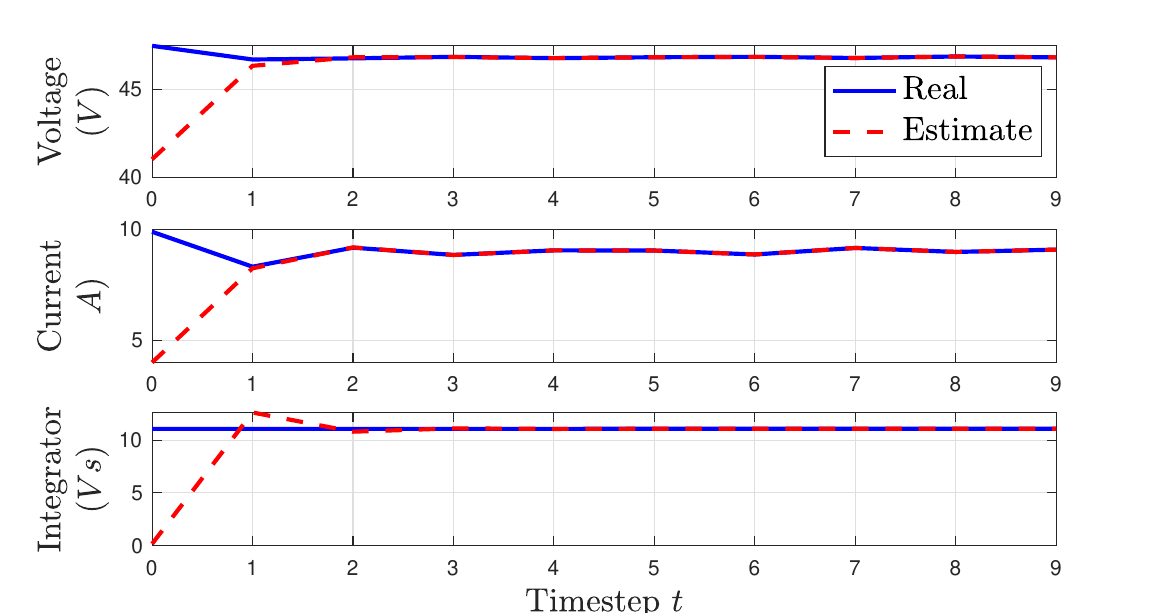}
	\caption{States and estimates in safe operation.} \label{fig:safe_states}\vspace{-0.45cm}
\end{figure}

\begin{figure}[t]
	\centering
	\includegraphics[width=0.45\textwidth]{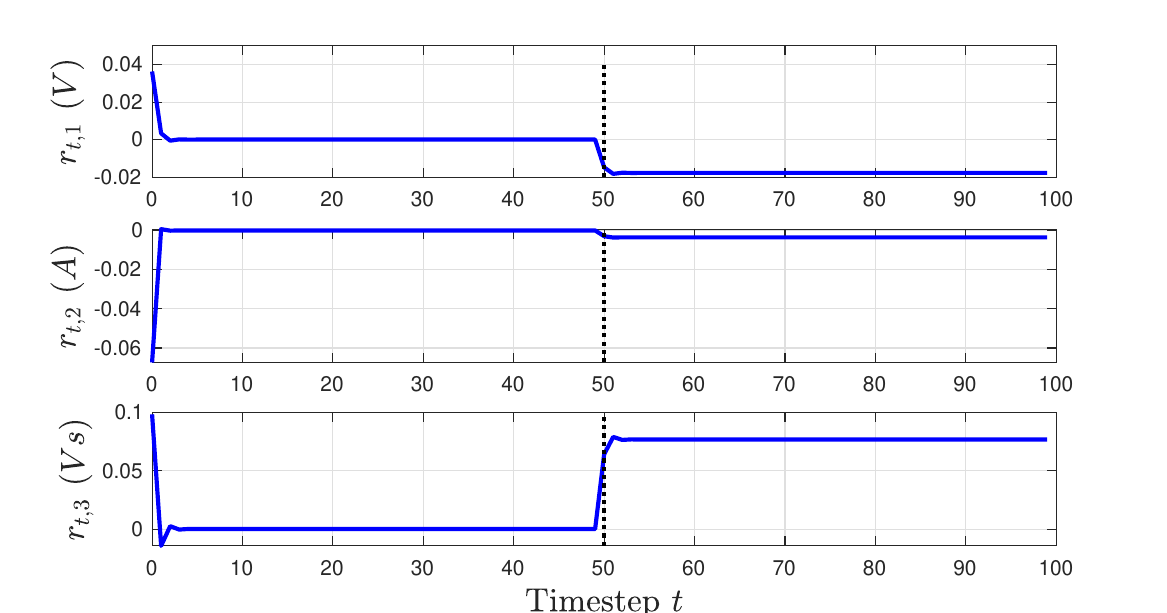}
	\caption{Residual signals in presence of attack. The vertical dotted line represents the start of the attack.} \label{fig:attacked_residuals}\vspace{-0.6cm}
\end{figure}

As shown in Lemma~1 in~\cite{gallo2020distributed}, it is possible that a UIO cannot detect any attack. We next introduce an attack in $y^c_t$ to illustrate that the data-driven UIO designed above can detect at least one attack and be used in the distributed cyber-attack detection scheme in~\cite{gallo2018distributed}. Using the same historical data, we run another simulation of length $N=100$ timesteps with random $x_0$ and $d$. Differently from the first case, a constant attack $\phi_t=[0.1~0.1~0.1]^\top$ is added on the communicated output variables in~\eqref{eq:y^c} after an attack start time of $T_a=50$. In this case, the output estimation error, called \textit{residual}, can be computed from the information available at the DGU $j$ as $r_t = y_t^c - \hat{x} = [r_{t,1}~r_{t,2}~r_{t,3}]^\top$. Figure~\ref{fig:attacked_residuals} demonstrates that the residuals are affected by the attack, showing the potential of the proposed method in distributed cyber-attack detection. \vspace{-0.20cm}

\section{Conclusions and Perspectives}\label{sec:Conclusions}
In this paper, we provide data-driven necessary and sufficient conditions for the existence of a UIO for an LTI system and propose a data-driven unknown-input state-estimation method. We also show the effectiveness of the algorithm for distributed state estimation in DCmGs. Future research directions include development of a completely data-driven attack-detection scheme using the proposed UIO, and the extension of the results to the case of noisy data.\vspace{-0.1cm}

\printbibliography
\end{document}